\documentclass{amsart}

\usepackage[utf8]{inputenc}
\usepackage[T1]{fontenc}
\usepackage{amsmath}
\usepackage{amsthm}
\usepackage{amssymb}
\usepackage{amsfonts}
\usepackage{booktabs}
\usepackage{algorithm}
\usepackage{algorithmic}
\usepackage[switch]{lineno}
\usepackage{placeins}
\usepackage{enumitem}
\usepackage{tabularx}

\usepackage[footskip=1cm,margin=1in]{geometry}

\usepackage{soul}
\usepackage{url}
\usepackage[hidelinks]{hyperref}
\usepackage[utf8]{inputenc}
\usepackage[small]{caption}
\usepackage{graphicx}
\usepackage{amsmath}
\usepackage{amsthm}
\usepackage{booktabs}
\usepackage{algorithm}
\usepackage{algorithmic}

\usepackage{natbib}
\setcitestyle{numbers,open={[},close={]}}

\makeatletter
\renewcommand{\section}{%
  \@startsection{section}{1}%
    \z@%
    {-1.5\baselineskip}%
    {0.8\baselineskip}%
    {\normalfont\scshape\centering}%
}

\renewcommand{\subsection}{%
  \@startsection{subsection}{2}%
    \z@%
    {-1.2\baselineskip}
    {0.6\baselineskip}
    {\normalfont\bfseries}%
}
\makeatother

\usepackage{cleveref}
\crefname{figure}{figure}{figures}
\hypersetup{
		pdfencoding=auto, 
		psdextra,
		colorlinks=true,
		citecolor=green!40!black,
		linkcolor=red!50!black,
		urlcolor=blue!80!black
	}
\usepackage{subcaption}

\setlist[description]{labelwidth=1cm}

\let\oldemph\emph
\renewcommand{\emph}[1]{\textcolor{blue!60!black}{\oldemph{#1}}}

\usepackage{fancyhdr}
\usepackage{mathtools}
\usepackage{subcaption}
\usepackage{booktabs}
\usepackage{multirow}
\usepackage[export]{adjustbox}
\usepackage{xargs}

\usepackage{booktabs}
\usepackage{nicefrac}
\usepackage{chngpage}
\usepackage{todonotes}
\usepackage{cleveref}

\usepackage{tikz}
\usetikzlibrary{cd}

\newtheorem{proposition}{Proposition}

\theoremstyle{definition}
\newtheorem{example}{Example}
\newtheorem{remark}{Remark}
\newtheorem{definition}{Definition}
\newtheorem{lemma}{Lemma}

\newcommand{\normphi}{{{\mathrm{norm}\hbox{-}\phi}}}

\usepackage{todonotes}

\newcommand{\Dkl}[2]{\mathrm{D}_{\mathrm{KL}}\!\left(#1\,\Vert\,#2\right)}

\makeatletter 
\def\WFfill{\par 
    \ifx\parshape\WF@fudgeparshape 
    \nobreak 
    \ifnum\c@WF@wrappedlines>\@ne 
    \advance\c@WF@wrappedlines\m@ne 
    \vskip\c@WF@wrappedlines\baselineskip 
    \global\c@WF@wrappedlines\z@ 
    \fi 
    \allowbreak 
    \WF@finale 
    \fi 
} 
\makeatother


\title{Allocation Proportionality of OWA--Based Committee Scoring Rules}

\author{Daria Boratyn}
\author{Dariusz Stolicki}

\address{Jagiellonian Center for Quantitative Political Science, Jagiellonian University, ul. Reymonta 4, Kraków, Poland}
\email{daria.boratyn@uj.edu.pl}
\email{dariusz.stolicki@uj.edu.pl}

\thanks{This research has been funded under the Polish National Center for Science grant no.~2019/35/B/HS5/03949 and the Jagiellonian University Excellence Inititative, QuantPol Center project.}

\begin{document}

\maketitle

\begin{abstract}
While proportionality is frequently named as a desirable property of voting rules, its interpretation in multiwinner voting differs significantly from that in apportionment. We aim to bridge these two distinct notions of proportionality by introducing the concept of allocation proportionality, founded upon the framework of party elections, where each candidate in a multiwinner election is assigned to a party. A~voting rule is allocation proportional if each party's share of elected candidates equals that party's aggregate score. Recognizing that no committee scoring rule can universally satisfy allocation proportionality in practice, we introduce a new measure of allocation proportionality degree and discuss how it relates to other quantitative measures of proportionality. This measure allows us to compare OWA-based committee scoring rules according to how much they diverge from the ideal of allocation proportionality. We present experimental results for several common rules: SNTV, $k$-Borda, Chamberlin-Courant, Harmonic Borda, Proportional $k$-Approval Voting, and Bloc Voting.
\end{abstract}

\section{Preliminaries}

In modern social choice, proportionality is widely (though not universally) regarded as an important component of fairness in preference aggregation \cite{Baker96}. It is a principle positing that each cohesive block of voters should be represented in the election outcome in proportion to its size. Obviously, this general statement admits multiple interpretations (what does it mean that a voting block is cohesive? what does it mean that it is represented? etc.), which translate into variety of formal definitions. In particular, the interpretation of proportionality in multiwinner voting differs significantly from that in apportionment (see Sec.~\ref{sec:prop}). To reconcile these two distinct conceptions of proportionality, in this paper we introduce a new notion of \emph{allocation proportionality}, based on the framework of party elections.

\subsection{Motivation}

There is a substantial gap between voting methods studied in computational social choice and those actually used in real-life political elections. Apart from STV, SNTV, and few marginal instances of the use of $k$--Borda, no ranked-choice (i.e., ordinal) multiwinner voting system is used in national political elections -- and even STV and SNTV are regarded as outliers among real-life electoral systems. For this reason, while the social choice properties of multiwinner voting rules have been extensively studied, we know almost nothing about their political effects. This paper seeks to bridge that gap by positioning OWA--based rules -- an important class of ordinal multiwinner voting rules -- on the fundamental axis describing electoral systems: from proportional to majoritarian.

The other motivation for this paper arises from one of the fundamental difficulties in defining proportionality for ordinal elections: we have no \textit{a priori} notion of which voter groups should be represented proportionally -- and even what does it mean for a group to be represented if candidates are ranked rather than approved. In most real-life political elections, we have predefined political parties with well-defined electorates, but this is untrue for most elections using ordinal voting. Thus, proportionality axioms are usually formulated in terms of cohesive voting blocks. Unfortunately, most definitions of cohesive voting blocks are rather restrictive, given that such blocks rarely appear in real-life data \cite{BrillEtAl23,McCuneGraham-Squire24}. Proportionality axioms based on such definitions tell us little about the behavior of voting systems in typical conditions. This paper seeks to solve the problem by proposing a widely applicable definition and measure of proportionality that uses experimental data with predefined parties.

\subsection{Contribution}

\textbf{First.} We introduce a new concept of allocation proportionality for OWA--based voting rules, bridging the gap between definitions based on representation of cohesive voting blocks and those focusing on apportionment proportionality.

\noindent \textbf{Second.} We introduce quantitative measures of allocation proportionality degree, corresponding to those used in apportionment proportionality.

\noindent \textbf{Third.} We compare common OWA--based voting rules, viz., SNTV, $k$--Borda, Chamberlin--Courant, Harmonic Borda, $k$--PAV, and Bloc Voting, measuring their allocation proportionality degree.

\noindent \textbf{Fourth.} We find that SNTV and $k$--PAV attain the best approximation of allocation proportionality, while $k$--Borda and Bloc Voting are consistently biased in favor of large parties. Chamberlin--Courant usually performs quite well, but -- especially if the number of parties is small while the committee size is large -- it can be highly biased in favor of small parties.

\subsection{Notation}

We end this section by introducing some notation that will be used throughout the paper:

\begin{description}
    \item[$\mathcal{P}$] For any set $X$, we denote its power set by $\mathcal{P}(X)$.
    \item[${[n]}$] For any ordinal $n$, let $[n] := \{1, \dots, n\}$ be the set of ordinals from $1$ to $n$.
    \item[$\Delta_{d}$] For any $d \in \mathbb{N}_{+}$, let $\Delta_d := \{ \mathbf{x} \in \mathbb{R}_{+}^{d} : \sum_{i \in [d]} x_i = 1\}$ be the $d$-vertex unit simplex.
    \item[$\lVert\cdot\rVert_{p}$] For any vector $\mathbf{x} \in \mathbb{R}^{d}$, where $d \in \mathbb{N}_{+}$, we denote its $L_p$ norm by $\lVert\mathbf{x}\rVert_{p}$, where $p \in [1, \infty) \cup \{\infty\}$.
    \item[$L$] For any set $X$, we denote the set of linear orders thereon by $L_X$.
\end{description}

We use $n$ for the number of voters, $m$ for the number of candidates, and $k$ for the committee size. In general, we use capital letters to denote sets and random variables, and boldface font for vectors, votes, and profiles.

\section{Proportionality Axioms and Quantitative Measures} \label{sec:prop}

The concept of proportional representation has been developed primarily with single-choice voting in mind \cite[see, e.g.,][]{BalinskiYoung01,Pukelsheim17}. In that context, it is quite clear what the standard of proportionality is: it consists of identity between vote shares and seat shares (we will refer to it as \emph{apportionment proportionality}). The only problem lies in that standard not being attainable for any system with discrete atomic seats. Hence, the scholarly debate about apportionment proportionality has focused primarily on how to quantify the inevitable deviations from the proportionality standard \cite{TaageperaGrofman03}, and which particular deviation objective a proportional representation system should minimize \cite{Wada16}.

In ordinal elections -- where a ballot consists of a linear ordering of candidates -- formalizing `proportionality' is more challenging \cite{BrillPeters23}. Under ordinal elections, there is no natural measure of the `number of votes' with which to combine seat shares. Moreover, unlike real-life single-choice systems, ordinal elections tend to lack explicit party groupings, so axioms have been developed to identify when a group of like-minded voters ``deserves'' representation.

\subsection{Proportionality Axioms}

The first definition of proportionality for ordinal voting systems has been proposed by \citet{Dummett84}, who formulated the axiom of \emph{Proportionality for Solid Coalitions (PSC)}. It requires any sufficiently large group of voters with ballot prefixes that are identical up to permutation, i.e., who rank the same set of candidates above all others, to receive proportional representation. Formally, for a quota $q$ (typically Hare $n/k$ or Droop $\lfloor n/(k+1) \rfloor + 1$), a group at least $\ell q$ in size that unanimously ranks a set of candidates of cardinality $\ell$ above all others must have at least $\ell$ members of the committee elected.

Unlike apportionment proportionality, PSC is actually attainable in practice. Indeed, one of the earliest known multiwinner ordinal voting systems, \emph{single-transferable vote (STV)} \cite{Hare59}, satisfies PSC \cite{Tideman95,LacknerSkowron20}. Nevertheless, PSC's demand for unanimity in ranking limits its robustness \cite{BrillPeters23}. Even slight deviations in the top-ranked candidates among coalition members can void its representation guarantee.

In approval voting, \citet{AzizEtAl17} introduced \emph{Justified Representation (JR)}, which requires any group of at least $n/k$ voters approving a common candidate to have at least one of their approved candidates elected. \emph{Extended Justified Representation (EJR)} strengthens this by requiring that groups large enough to deserve $\ell$ seats who approve at least $\ell$ common candidates must see at least one member approve $\ell$ elected candidates. \emph{Proportional Justified Representation (PJR)} \cite{Sanchez-FernandezEtAl17} is an intermediate axiom requiring that such a group sees $\ell$ approved candidates elected, not necessarily per individual. These axioms form a hierarchy: EJR $\Rightarrow$ PJR $\Rightarrow$ JR.


\citet{BrillPeters23} introduced robust variants of PSC and EJR that relax the unanimity requirements. These new axioms allow near-cohesive groups to gain representation and are polynomial-time verifiable. Their robust ordinal PSC variant is strictly stronger than classical PSC and is not satisfied by STV. For clustering generalizations of their model, see \cite{AzizEtAl24,KellerhalsPeters24}.

\citet{AzizLee21} generalized PSC to weak preference orders and proposed the Expanding Approvals Rule (EAR), which satisfies generalized PSC while addressing STV’s non-monotonicity. STV generalizations satisfying generalized PSC have been considered in \cite{DelemazurePeters24}.

\subsection{Quantitative Measures}

Empirical work (e.g., \cite{BrillEtAl23,McCuneGraham-Squire24}) shows solid coalitions rarely occur in real-world data, motivating the shift toward quantitative or robust proportionality assessments.

The first non-apportionment measure of proportionality in computational social choice has been the \emph{proportionality degree}, introduced by \citet{Skowron18} for approval voting systems. Proportionality degree was defined using the concept of a \emph{$k$-proportionality guarantee}: a function of voter group size and committee size such that if the intersection of that group's approval sets exceeds the guarantee, the group's average number of approved representatives will always do so as well. A proportionality degree of a rule is the best possible proportionality guarantee that holds irrespectively of the size of the committee \cite{Skowron18}.

A similar approach to quantifying proportionality has been taken by \citet{Janson18}, who analyzed threshold guarantees for PSC, JR, EJR, and PJR, comparing rules by coalition sizes required to guarantee representation.

\citet{BardalEtAl25} quantify ordinal proportionality by first defining a weakened, parametrized version of PSC (and several other axioms), and then finding the smallest parameter value for which the parameterized axiom is satisfied in a given election. They provide experimental results, focusing primarily on STV, and note certain similarities between their PSC value and Jefferson--D'Hondt apportionment method.

\section{Theoretical Framework: Party Elections}

The main difficulty underlying most attempts at defining or measuring proportionality in the context of ordinal or approval voting lies in the fact that cohesive voting groups are not predefined. However, if our focus is on proportionality as a property of voting rules, rather than as a quantity to be measured for particular, real-life elections, we can avoid that problem by simply predefining the groups and analyzing elections in which some form of cohesive behavior is enforced by the statistical culture. For that purpose, we use the framework of \emph{party elections}, first introduced by \citet{BoratynEtAl25}.

\begin{definition}[\sc Party Election] \label{def:partyElec}
    A \emph{party election} ($C$, $P$, $a$, $\mathcal{D}_{C}$, $\mathbf{w}$) is a tuple consisting of a set of \emph{candidates}, $C$; a set of \emph{parties}, $P$; a surjective \emph{affiliation function}, $a: C \rightarrow P$, that maps each candidate to a party; a \emph{set of admissible votes}, $\mathcal{D}_{C}$; and a \emph{profile} (i.e., a collection of votes), $\mathbf{w}$.
\end{definition}

Intuitively, a party election is an election endowed with an additional structure: every candidate is mapped to some party. A party election is ordinal if $D_C = L_C$, i.e., the set of admissible votes equals the set of linear orders on $C$. This will be assumed throughout the remainder of the paper.


If party elections are to be regarded as approximations of real-life political elections, one additional complication is in order. In most countries, a `legislative election' is actually not a single election in the sense of Definition \ref{def:partyElec}, but an ensemble of parallel elections held in distinct (though sometimes overlapping) \emph{electoral districts}. This is important insofar as proportionality is concerned, since -- depending on the distribution of preferences across districts -- use of multiple districts can either improve proportionality or aggravate disproportionality. Formally, incorporating districts into our model is quite simple:

\begin{definition}[\sc Multi-District Party Election] \label{def:mdPartyElec}
    A \emph{multi-district party election} is a sequence of party elections, $\mathcal{E}_1, \dots, \mathcal{E}_{c}$, where $c$ is the number of districts and $\mathcal{E}_{i} := (C_i, P, a_i, \mathcal{D}_{C_i}, \mathbf{w}_{i})$ for every $i \in [c]$, such that all districts share the same set of parties.
\end{definition}

\section{Committee Scoring Rules}

A \emph{multiwinner voting rule} is any function that maps a candidate set $C$, a profile $\mathbf{w}$, and a committee size $k \in [m]$ to a set of $k$-subsets of $C$ that tie as winning committees. However, our focus in this chapter is only on a subset of all multiwinner voting rules, namely on the class of \emph{OWA--based committee scoring rules}. They are a subclass of \emph{committee scoring rules}, introduced by \citet{FaliszewskiEtAl16,ElkindEtAl17a} and treated in detail by \citet{FaliszewskiEtAl19a}.

For a linear order $\prec$ and a candidate $i$, let $\mathrm{pos}_{\prec}(i) := |\{j \in C: i \preceq j\}|$ denote the position of $i$ in $\prec$, i.e., the number of candidates $j \in C$ such that $i \preceq j$.

\begin{definition}[\sc Position Vector]
    For a $k$-committee $S \subseteq C$ and a linear order $\prec$, the \emph{position vector}, $\mathbf{p}_{\prec}(S)$, is a vector of positions of candidates of $S$ in $\prec$ permuted so that its coordinates are sorted in the increasing order:
    \[
        \mathbf{p}_{\prec}(S) = \bigl(p_{\prec, 1},\dots, p_{\prec, k}\bigr)
    \]
        where
    \[
        \{p_{\prec, 1}, \dots, p_{\prec, k}\} := \{\mathrm{pos}_{\prec}(i): i \in S\}, \; p_{\prec, 1} < \cdots < p_{\prec, k}.
    \]
\end{definition}

We denote the set of all position vectors for a committee of $k$ out of $m$ candidates (i.e., the set of increasing functions $[k] \rightarrow [m]$) by $[m]_k$.

\begin{definition}[\sc Partial Order on Position Vectors] \label{def:partialOrder}
    Let $\trianglelefteq$ be a partial order on $[m]_k$ such that for any position vectors $\mathbf{p}, \mathbf{q} \in [m]_k$ we have $\mathbf{p} \trianglelefteq \mathbf{q}$ (respectively $\mathbf{p} \triangleleft \mathbf{q}$) if and only if for every $i \in [k]$ we have $p_i \ge q_i$ (respectively $p_i > q_i$).
\end{definition}

\begin{definition}[\sc Committee Scoring Function \cite{ElkindEtAl17a}]
    A \emph{committee scoring function} for a committee of $k$ out of $m$ candidates is a function $f_{m,k}: [m]_k \rightarrow \mathbb{R}_{\ge 0}$ such that $\mathbf{p} \trianglelefteq \mathbf{q}$ implies $f_{m,k}(\mathbf{p}) \le f_{m,k}(\mathbf{q})$ for every $\mathbf{p}, \mathbf{q} \in [m]_k$.
\end{definition}

\begin{definition}[\sc Committee Scoring Rules \cite{ElkindEtAl17a}]
    Let $f := (f_{m,k})_{k \le m}$ be a family of committee scoring functions. A \emph{committee scoring rule} $\mathcal{R}_{f}$ is a multiwinner social choice function that maps a candidate set $C$, a profile $\mathbf{w}$, and a committee size $k \in [m]$ to the set of all $k$-subsets of $C$ maximizing $\sum_{\prec \in \mathbf{w}} f_{m,k} \left(\mathbf{p}_{\prec}(S)\right)$
\end{definition}

The class of \emph{OWA--Based Committee Scoring Rules} was first proposed by \citet{SkowronEtAl16a}. The intuition behind OWA-based voting rules is to define committee scoring functions in terms of weighted sums of individual candidate scores while introducing flexibility in how voter satisfaction is defined. Instead of assuming that each voter evaluates a committee solely based on, for example, their most preferred or least preferred member within it, OWA-based rules use an \emph{ordered weighted averaging} (OWA) operator to interpolate between these extremes. For each vote $\prec$, the positional scoring vector is used to assign a score to every candidate as a function of that candidate’s position in $\prec$. A committee evaluation score is then obtained by aggregating committee member scores using the OWA operator defined by OWA weight vector $\mathbf{z}$, which weights the voter’s $l$-th most preferred committee member by $z_l$. The winning committee is the $k$-subset whose total OWA-aggregated satisfaction, summed across all voters, is maximal. This framework captures a wide range of aggregation attitudes: from highly egalitarian (e.g., considering only the worst-ranked member) to highly utilitarian (e.g., summing scores of all members), depending on the choice of the OWA weight vector.

\pagebreak

\begin{definition}[\sc OWA--Based Scoring Rule]
An \emph{OWA--based scoring rule} induced by an \emph{OWA weight vector} $\mathbf{z} \in [0, 1]^{k}$ and a \emph{scoring vector} $\mathbf{s}\in [0, 1]^{m}$, $\mathcal{R}^{\mathrm{OWA}}_{\mathbf{s}, \mathbf{z}}$, is a social choice function that maps a party election to $[C]^{k}$, the set of all $k$--subsets of $C$, maximizing over committee $S \in [C]^{k}$ the expression $\sum_{\prec \in \mathbf{w}} \sum_{i=1}^k z_i s_{p_{\preceq,i}(S)}$, where $p_{\preceq,i}(S)$ is the $i$--th coordinate of position vector $\mathbf{p}_{\preceq}(S)$ \citep{SkowronEtAl16a, FaliszewskiEtAl19a}.
\end{definition}

To define common OWA--based rules, let us introduce some standard scoring and OWA weight vectors:

\begin{description}[labelwidth=3em, leftmargin=\dimexpr\labelwidth+\labelsep\relax]
\item[{$\mathbf{u}_{k,l}$}]
Let the $l$-dimensional \emph{$k$--approval vector} $\mathbf{u}_{k,l}$, $l \ge k$, be a vector of $k$ ones followed by $l-k$ zeroes:
$$\mathbf{u}_{k,l} := (\overbrace{1, \dots, 1}^{k}, \overbrace{0, \dots, 0}^{l-k}).$$
\item[{$\mathbf{b}_{k}$}]
Let the $k$-dimensional \emph{Borda vector} $\mathbf{b}_{k}$ be defined as $$\mathbf{b}_{k} := \frac{(k-1, k-2, \dots, 0)}{k-1}.$$
\item[{$\mathbf{h}_{k}$}]
Let the $k$-dimensional \emph{harmonic vector} $\mathbf{h}_{k}$ be defined as $$\mathbf{h}_{k} := \left(1, \frac{1}{2}, \dots, \frac{1}{k}\right).$$
\end{description}

The rules belonging to the class of OWA--based scoring rules include:

\begin{definition}[SNTV]
    The Single Non-Transferable Vote (SNTV) rule, $$\mathcal{R}_{k}^{\mathrm{SNTV}} := \mathcal{R}^{\mathrm{OWA}}_{\mathbf{u}_{1,m}, \mathbf{u}_{1,k}},$$ is an OWA--based rule induced by the $1$-approval scoring vector $\mathbf{u}_{1,m}$ and the $1$-approval OWA--vector $\mathbf{u}_{1,k}$, i.e., where the members of the winning committee maximize the number of votes where they were ranked first.
\end{definition}

\begin{definition}[\sc $k$--Borda] 
    The $k$--Borda rule, $$\mathcal{R}_{k}^{\mathrm{Borda}} := \mathcal{R}^{\mathrm{OWA}}_{\mathbf{b}_{m}, \mathbf{u}_{k,k}},$$ is an OWA--based rule induced by the Borda scoring vector $\mathbf{b}_{m}$ and the $k$-approval OWA--vector $\mathbf{u}_{k,k}$, i.e., where the members of the winning committee maximize the sum of their Borda scores \citep{deBorda81,FaliszewskiEtAl17}.
\end{definition}

\begin{definition}[\sc Bloc Voting]
    The Bloc Voting rule, $$\mathcal{R}_{k}^{\mathrm{BV}} := \mathcal{R}^{\mathrm{OWA}}_{\mathbf{u}_{k,m}, \mathbf{u}_{k,k}},$$ is an OWA--based rule induced by the $k$-approval scoring vector $\mathbf{u}_{k,m}$ and the $k$-approval OWA--vector $\mathbf{u}_{k,k}$, i.e., where the members of the winning committee maximize the number of votes where they were ranked within $k$ first candidates.
\end{definition}

\begin{definition}[\sc Chamberlin--Courant (CC)]
    The Chamberlin--Courant (CC) rule, $$\mathcal{R}_{k}^{\mathrm{CC}} := \mathcal{R}^{\mathrm{OWA}}_{\mathbf{b}_{m}, \mathbf{u}_{1,k}},$$ is an OWA--based rule induced by the Borda scoring vector $\mathbf{b}_{m}$ and the $1$--approval OWA--vector $\mathbf{u}_{1,k}$, i.e., where the winning committee maximizes the sum of Borda scores of the highest--ranking member \citep{ChamberlinCourant83}.
\end{definition}

\begin{definition}[\sc Harmonic-Borda]
    The Harmonic Borda rule, $$\mathcal{R}_{k}^{\mathrm{HB}} := \mathcal{R}^{\mathrm{OWA}}_{\mathbf{b}_{m}, \mathbf{h}_{k}},$$ is an OWA--based rule induced by the Borda scoring vector $\mathbf{b}_{m}$ and the harmonic OWA--vector $\mathbf{h}_{k}$, i.e., where the winning committee maximizes the sum of committee scores defined as the sum of the Borda scores of committee members with harmonically decreasing weights \citep{FaliszewskiEtAl17}.
\end{definition}

\begin{definition}[\sc Proportional $k$--Approval Voting]
    The Proportional $k$--Approval rule, $$\mathcal{R}_{k}^{\mathrm{PAV}} := \mathcal{R}^{\mathrm{OWA}}_{\mathbf{u}_{k,m}, \mathbf{h}_{k}},$$ is an OWA--based rule induced by the $k$-approval scoring vector, $\mathbf{u}_{k,m}$, and the harmonic OWA--vector $\mathbf{h}_{k}$, i.e., where the winning committee maximizes the sum of harmonic scores, $H_j$, where $j$ is the number of committee members included within the voter's $k$ highest--ranked candidates \citep{Kilgour10,Thiele95}. Note that this rule differs from classical PAV (Thiele) rule by requiring each voter to approve exactly $k$ candidates.
\end{definition}

\section{Allocation Proportionality}

Even with predefined parties, there remains the more fundamental of the two problems involved in applying apportionment proportionality in the ordinal voting context: the lack of a natural equivalent to the number of votes. For such an equivalent to exist, we would need a total preorder $\curlyeqprec$ on the set of profiles, $V_{C}$, such that the quotient poset of $(V_{C}, \curlyeqprec)$ is order-embeddable into $(\mathbb{R}, \le)$, and $\curlyeqprec$ extends the partial ordering on position vectors $\trianglelefteq$ given by Definition \ref{def:partialOrder} in such a manner that every voting rule monotonic with respect to $\trianglelefteq$ (i.e., in the sense of \citet{Fishburn77}) is likewise monotonic with respect to $\curlyeqprec$. In particular, all committee scoring rules would need to be monotonic with respect to $\curlyeqprec$.

However, consider the following example: fix any $m > 4$, any $k \ge 3$, and any candidate $i \in C$. For every $j \in [m]$ let $\prec_{j} \in L_{C}$ be such that $\mathrm{pos}_{\prec_j}(i) = j$. Finally, let $\mathbf{w}_{1}:= \{\prec_1, \prec_{k+1}\}$ and $\mathbf{w}_{2}:= \{\prec_2, \prec_{k-1}\}$. Under $k$-Borda, candidate $i$ is better off under $\mathbf{w}_{2}$ than $\mathbf{w}_{1}$, implying that $\mathbf{w}_{1} \curlyeqprec \mathbf{w}_{2}$. But under $k$-PAV, candidate $i$ is better off under $\mathbf{w}_{1}$ than under $\mathbf{w}_{2}$, implying that $\mathbf{w}_{2} \curlyeqprec \mathbf{w}_{1}$. As this is a contradiction, it follows that no such total preorder $\curlyeqprec$ exists.

Nevertheless, while no universal measure of the number of votes exists for ordinal voting, rule-specific measures can be defined. In particular, for any OWA--based rule $\mathcal{R}$ induced by a scoring vector $\mathbf{s}$, we can define the \emph{aggregate party score}:

\begin{definition}[\sc Aggregate Party Score] \label{def:aps}
    Let $\mathcal{R}$ be any OWA--based rule induced by a scoring vector $\mathbf{s}$, and let $(C, P, a, L_{C}, \mathbf{w})$ be a party election. For every party $i \in P$, its \emph{aggregate party score} under $\mathcal{R}$ is the sum of its candidate scores divided by the sum of all candidate scores, i.e.,
    \begin{equation} \label{eq:psi-def}
        \psi_i^{\mathcal{R}} := \frac{\sum_{\prec \in \mathbf{w}} \sum_{j \in a^{-1}(i)} s_{\mathrm{pos}_{\prec}(j)}}{\sum_{\prec \in \mathbf{w}} \sum_{j \in C} s_{\mathrm{pos}_{\prec}(j)}} = \frac{\sum_{\prec \in \mathbf{w}} \sum_{j \in a^{-1}(i)} s_{\mathrm{pos}_{\prec}(j)}}{|\mathbf{w}| \lVert\mathbf{s}\rVert_1}.
    \end{equation}
\end{definition}

For multi-district elections, aggregate party scores are obtained by averaging over districts.

With aggregate party scores as our counterpart to vote shares, we can finally introduce our definition of \emph{allocation proportionality}:

\begin{definition}[\sc Allocation Proportionality]
    An OWA--based voting rule is \emph{allocation proportional} for a given party election if for every $i \in P$ the share of committee seats obtained by candidates of the $i$-th party equals the aggregate party score of that party.
\end{definition}

Let us establish that our aggregate party score $\psi$ (Def.~\ref{def:aps}) is a good \emph{vote surrogate} for ordinal profiles by showing that it satisfies the following desiderata: 
(i) \emph{anonymity} and \emph{neutrality}; 
(ii) \emph{additivity} in the electorate size; 
and (iii) \emph{monotonicity}: if each candidate of party $i$ is (weakly) moved up in every ballot w.r.t.\ $s$, then $\psi_i$ does not decrease.

\begin{lemma}[\sc Basic properties of $\psi$]
\label{lem:psi-properties}
Fix an ordinal party election $(C,P,a,L_C,w)$ and a scoring vector $\mathbf{s}$. Fix any OWA--based rule $\mathcal{R}$ induced by $\mathbf{s}$. Then:
\begin{enumerate}
\item \textbf{Normalization:} $\sum_{i\in P} \psi_i^\mathcal{R}=1$.
\item \textbf{Anonymity and neutrality:} $\psi_i^\mathcal{R}$ is invariant under permuting voters and under any relabeling of parties and candidates that preserves $a$.
\item \textbf{Additivity (reinforcement):} For two profiles $\mathbf{w}, \mathbf{w}'$ on the same $(C,P,a)$,
\[
\psi_i^\mathcal{R}(C, P, a, L_C, \mathbf{w} \cup \mathbf{w}') \;=\; 
\frac{|\mathbf{w}|\,\psi_i^\mathcal{R}(C,P,a,L_C,\mathbf{w})+|\mathbf{w}'| \, \psi_i^\mathcal{R}(C,P,a,L_C,\mathbf{w}')}{|\mathbf{w}|+|\mathbf{w}'|}.
\]
In particular, if $\psi_i^\mathcal{R}(\mathbf{w})=\psi_i^\mathcal{R}(\mathbf{w}')$, then $\psi_i^\mathcal{R}(\mathbf{w} \cup \mathbf{w}')=\psi_i^\mathcal{R}(\mathbf{w})$.
\item \textbf{Monotonicity in positions:} If in going from $\mathbf{w}$ to $\mathbf{w}'$ every candidate of party $i$ weakly improves their position in every vote (w.r.t.\ $\mathbf{s}$), then $\psi_i^\mathcal{R}(\mathbf{w}')\ge \psi_i^\mathcal{R}(\mathbf{w})$.
\item \textbf{Independence from OWA weights:} $\psi_i^\mathcal{R}$ depends only on the scoring vector $\mathbf{s}$ and not on the OWA vector of the evaluated rule.
\end{enumerate}
\end{lemma}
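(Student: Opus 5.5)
All five claims follow directly from the closed form \eqref{eq:psi-def}. The plan is to first record the identity behind its second equality and then check each item in turn. The identity is that, for each vote $\prec \in \mathbf{w}$, the map $j \mapsto \mathrm{pos}_{\prec}(j)$ is a bijection $C \to [m]$. Hence
\[
\sum_{j\in C} s_{\mathrm{pos}_{\prec}(j)}=\sum_{r=1}^m s_r=\lVert\mathbf{s}\rVert_1,
\]
using $\mathbf{s}\ge 0$. The denominator is therefore the constant $|\mathbf{w}|\lVert\mathbf{s}\rVert_1$, which depends only on the number of voters and not on the profile. I will assume $\lVert\mathbf{s}\rVert_1>0$ so that $\psi$ is well defined, and state this explicitly.

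\textbf{Normalization.} Since $a$ is a function, its fibres $a^{-1}(i)$, $i\in P$, partition $C$. Summing the numerators over $i\in P$ therefore gives $\sum_{\prec}\sum_{j\in C}s_{\mathrm{pos}_\prec(j)}$, which is exactly the denominator.

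\textbf{Anonymity and neutrality.} The numerator is a sum over the multiset $\mathbf{w}$, so it is invariant under reordering voters. For neutrality, take a bijection $\sigma$ on $C$ and a bijection $\tau$ on $P$ with $a\circ\sigma^{-1}=\tau^{-1}\circ a$ (relabeling preserving $a$), and transform each vote accordingly. Then $\mathrm{pos}$ of $\sigma(j)$ in the relabeled vote equals $\mathrm{pos}_\prec(j)$, and $\sigma$ maps $a^{-1}(i)$ onto the fibre of $\tau(i)$. The numerator of $\tau(i)$ after relabeling thus equals the numerator of $i$ before, and the denominator is unchanged.

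\textbf{Additivity.} Write $N_i(\mathbf{w})$ for the numerator, which is additive over multiset union. Then
\[
\psi_i(\mathbf{w}\cup\mathbf{w}')=\frac{N_i(\mathbf{w})+N_i(\mathbf{w}')}{(|\mathbf{w}|+|\mathbf{w}'|)\lVert\mathbf{s}\rVert_1}.
\]
Substituting $N_i(\mathbf{w})=|\mathbf{w}|\lVert\mathbf{s}\rVert_1\psi_i(\mathbf{w})$, and likewise for $\mathbf{w}'$, yields the displayed weighted average. The "in particular" clause is immediate, since a convex combination of two equal numbers is that number.

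\textbf{Monotonicity.} This is the only item needing care about interpretation. I read "weakly improves w.r.t.\ $\mathbf{s}$" as $s_{\mathrm{pos}_{\prec'}(j)}\ge s_{\mathrm{pos}_\prec(j)}$ for each $j\in a^{-1}(i)$ and corresponding votes $\prec,\prec'$. When $\mathbf{s}$ is nonincreasing, as for all vectors used in the paper, this follows from $\mathrm{pos}_{\prec'}(j)\le\mathrm{pos}_\prec(j)$. The argument also requires pairing votes of $\mathbf{w}$ and $\mathbf{w}'$ bijectively, so $|\mathbf{w}|=|\mathbf{w}'|$. With that pairing, the numerator increases termwise and the denominator $|\mathbf{w}|\lVert\mathbf{s}\rVert_1$ is unchanged, which gives the claim. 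The main subtlety is exactly this point: one must note that the denominator is profile-independent. Otherwise, moving party $i$'s candidates up, which pushes other parties' candidates down, might appear to change the total. The bijection identity above rules this out.

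\textbf{Independence from OWA weights.} This is read off from \eqref{eq:psi-def}, where only $\mathbf{s}$, $\mathbf{w}$, and $a$ appear and $\mathbf{z}$ does not. Any two OWA-based rules sharing $\mathbf{s}$ therefore yield identical $\psi$.
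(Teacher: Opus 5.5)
Your proof is correct and follows the same item-by-item route as the paper, reading each property directly off the closed form \eqref{eq:psi-def}. You also make explicit what the paper leaves implicit: the denominator equals $|\mathbf{w}|\lVert\mathbf{s}\rVert_1$ because each vote induces a bijection $C\to[m]$, $\lVert\mathbf{s}\rVert_1>0$ is needed, and monotonicity presupposes a vote-by-vote pairing of $\mathbf{w}$ with $\mathbf{w}'$.
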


\begin{proof}
(1) Immediate from Definition~\ref{def:aps} since the denominator equals the sum of numerators over $i\in P$.
(2) Both follow because $\psi$ is defined by summing position-based scores and then normalizing;
these operations commute with voter permutations and candidate/party relabelings that preserve $a$.
(3) For each party $i$, the numerator in~\eqref{eq:psi-def} adds over $\mathbf{w}$ and $\mathbf{w}'$,
and the denominator scales by $|\mathbf{w}|+|\mathbf{w}'|$; dividing yields the stated convex combination.
(4) If party $i$'s candidates weakly improve in each vote (w.r.t.\ $\mathbf{s}$), then the numerator of $\psi_i^\mathcal{R}$ (a sum of nonnegative $\mathbf{s}$-scores) weakly increases while the denominator is fixed, hence $\psi_i^\mathcal{R}$ weakly increases.
(5) OWA vector appears only in the \emph{evaluation of committees}, not in~\eqref{eq:psi-def}; $\psi$ is computed before any committee is chosen. 
\end{proof}

\begin{remark}[\sc A $\mathbf{z}$-aware robustness check]
\label{rem:t-aware}
One can define a party-level surrogate that discounts within-party lower-ranked candidates by an OWA vector $\tilde{\mathbf{z}}$:
for each vote, sort the $\mathbf{s}$-scores of party-$i$ candidates in nonincreasing order and aggregate them by $\tilde{\mathbf{z}}$; then sum over voters and normalize across parties as in~\eqref{eq:psi-def}. 
Our experiments (omitted here) show qualitatively similar conclusions; we keep the $\mathbf{z}$-independent $\psi$ as default because $\mathbf{s}$ captures \emph{voting},
while $\mathbf{z}$ captures \emph{within-committee aggregation}.
\end{remark}

\begin{proposition}[\sc Party-list ballots align $\psi$ with vote shares under SNTV]
\label{prop:psc-bridge}
With the plurality ($1$-approval) scoring vector $\mathbf{s}=\mathbf{u}_{1,m}$, the aggregate party score equals the party's voter share: $\psi_i^\mathrm{SNTV}=\frac{n_i}{n}$ for each $i\in P$, where $n_i$ is the number of voters whose top-ranked candidate is affiliated with party $i$.
\end{proposition}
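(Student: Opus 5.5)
The plan is to substitute $\mathbf{s}=\mathbf{u}_{1,m}$ directly into Definition~\ref{def:aps} and evaluate numerator and denominator separately.

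\textbf{Denominator.} Since $\mathbf{u}_{1,m}=(1,0,\dots,0)$, we have $\lVert\mathbf{s}\rVert_1=1$. By the second expression in \eqref{eq:psi-def}, the denominator is therefore $|\mathbf{w}|=n$. One can also check this without that identity. In each vote $\prec$, the map $j\mapsto \mathrm{pos}_{\prec}(j)$ is a bijection $C\to[m]$, so $\sum_{j\in C}s_{\mathrm{pos}_\prec(j)}=\sum_{r\in[m]}s_r=1$. Summing over the $n$ votes gives $n$.

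\textbf{Numerator.} Fix a vote $\prec$. The inner sum $\sum_{j\in a^{-1}(i)}s_{\mathrm{pos}_\prec(j)}$ counts the candidates of party $i$ that occupy position $1$ in $\prec$. Exactly one candidate $j^\ast$ has position $1$, so this sum equals the indicator $\mathbf{1}[a(j^\ast)=i]$, i.e., whether the top-ranked candidate of $\prec$ is affiliated with party $i$. Summing over $\prec\in\mathbf{w}$ counts the votes whose top choice belongs to party $i$, which is $n_i$ by definition. Votes are counted with multiplicity, since $\mathbf{w}$ is a collection.

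\textbf{Conclusion.} Dividing gives $\psi_i^{\mathrm{SNTV}}=n_i/n$. As a consistency check, $\sum_i n_i=n$, because the affiliation map $a$ assigns every top candidate to exactly one party; this agrees with the normalization property in Lemma~\ref{lem:psi-properties}.

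The only point needing care is the convention for $\mathrm{pos}_\prec$. It must be read so that the most preferred candidate has position $1$, which is the reading under which SNTV rewards first-ranked candidates and under which $n_i$ is defined. I would state this convention explicitly. Beyond that, the argument is a direct computation with no real obstacle. For multi-district elections the result would hold districtwise, and after averaging it gives the district-averaged vote share.
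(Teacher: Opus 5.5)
Your proposal is correct and is the same direct computation as the paper. Each vote adds $1$ to the numerator exactly when its top-ranked candidate belongs to party $i$, giving $n_i$, while the denominator is $|\mathbf{w}|\lVert\mathbf{s}\rVert_1 = n$; your explicit note that $\mathrm{pos}_\prec$ must place the most preferred candidate at position $1$ is a useful clarification that the paper leaves implicit.
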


\begin{proof}
Under $\mathbf{s}=\mathbf{u}_{1,m}$, a vote contributes $1$ only to the unique top-ranked candidate. Thus, summing over all voters, party $i$ receives exactly $n_i$ points in the numerator of~\eqref{eq:psi-def}. The denominator is $|\mathbf{s}|\|\mathbf{s}\|_1 = n\cdot 1 = n$, hence $\psi_i^\mathrm{SNTV} = n_i/n$.
\end{proof}

Allocation proportionality as a concept is much more akin to apportionment proportionality than to PSC or the various justified representation axioms. Most importantly, we do not really expect any voting rule to satisfy allocation proportionality, since we have the classic problem of seats being discrete and (usually) much less numerous than voters. Instead, we are thinking primarily in terms of a \emph{degree} of allocation proportionality.

A number of \emph{proportionality indices} have been developed in the field of apportionment proportionality \cite{TaageperaGrofman03}. Norm-based measures dominate the literature \cite[see, e.g.,][]{LoosemoreHanby71,Rose84,Gallagher91,Monroe94}, including  $L_1$ (Manhattan), $L_2$ (Euclidean), and $L_\infty$ (Chebyshev) distances between the vote share vector and the seat share vector. A big advantage of norm-based measures of proportionality is their easy interpretability. Moreover, they possess most of the axiomatic properties desirable in a proportionality index \cite{TaageperaGrofman03}.

\citet{Wada16} proposed $\alpha$-divergence as a theoretically superior alternative to norm-based proportionality indices \cite[see also][]{WadaKamahara18}. For $\mathbf{v}, \mathbf{s} \in \Delta_{d}$ and $\alpha \in \mathbb{R}$, $\alpha$-divergence of $\mathbf{s}$ from $\mathbf{v}$ is generally given by:
\begin{equation} \label{eq:alphaDiv}
    D^{\alpha}(\mathbf{v} \Vert \mathbf{s}) = \frac{1}{\alpha(\alpha-1)} \sum_{i=1}^{d} s_i \left(\left(\frac{v_i}{s_i}\right)^{\alpha} - 1\right),
\end{equation}
and for $\alpha = 0, 1$ by taking the limit:
\begin{equation} \label{eq:alphaDiv0}
    D^{0}(\mathbf{v} \Vert \mathbf{s}) = -\sum_{i=1}^{d} s_i \log\left(\frac{v_i}{s_i}\right),
\end{equation}
and
\begin{equation} \label{eq:alphaDiv1}
    D^{1}(\mathbf{v} \Vert \mathbf{s}) = \sum_{i=1}^{d} v_i \log\left(\frac{v_i}{s_i}\right).
\end{equation}
Note that \ref{eq:alphaDiv1} is the Kullback--Leibler \cite{KullbackLeibler51} divergence of $\mathbf{v}$ from $\mathbf{s}$. Note also that $\alpha$-divergence is not a distance measure, since it satisfies neither symmetry nor the triangle inequality. However, as in the case of distances, divergence of $0$ implies $\mathbf{v} = \mathbf{s}$.

The choice of the optimal measure of allocation proportionality degree depends on the particular application. When reporting the experimental results, we use the $\alpha$-divergence, $\alpha = 1$, between the aggregate party score vector and the seat share vector as our primary measure.

\begin{example}[\sc A small party election illustrating allocation proportionality]
\label{ex:running}
Consider two parties $A,B$ and $m=4$ candidates $\{A_1,A_2,B_1,B_2\}$ with committee size $k=3$ and $n=10$ voters. 
The preference profile is:
\begin{align*}
&3\times \bigl(A_1 \succ A_2 \succ B_1 \succ B_2\bigr),\quad
  3\times \bigl(A_2 \succ A_1 \succ B_1 \succ B_2\bigr),\\
&2\times \bigl(B_1 \succ B_2 \succ A_1 \succ A_2\bigr),\quad
  2\times \bigl(B_2 \succ B_1 \succ A_1 \succ A_2\bigr).
\end{align*}

\paragraph{Aggregate party scores $\psi$ (Def.~\ref{def:aps}).}
Take the Borda scoring vector on $m=4$, $b_4=(3,2,1,0)$.%
\footnote{Using the normalized Borda vector $b_4/(m-1)$ would not change any $\psi$ values,
because the normalization cancels in~\eqref{eq:psi-def} below.}
The candidate Borda totals are
$A_1=19$, $A_2=15$, $B_1=16$, $B_2=10$, hence the party totals are
$A:34$ and $B:26$. Normalizing by $\sum_{c\in C}\text{Borda}(c)=60$ gives
\[
\psi_A=\tfrac{34}{60}=0.566\overline{6},\qquad \psi_B=\tfrac{26}{60}=0.433\overline{3}.
\]
Formally, with notation from Definition~\ref{def:aps}, 
\begin{equation}
\psi_i \;=\; 
\frac{\sum_{\prec\in w}\sum_{j\in a^{-1}(i)} s_{\mathrm{pos}_\prec(j)}}{|w|\,\|s\|_1},
\end{equation}
which for $s=b_4$ specializes to the computation above.
\medskip

\paragraph{Seat shares under common OWA rules.}
For each rule we evaluate the committee score of all $\binom{4}{3}$ committees.
For $k$--Borda, Bloc, Harmonic Borda, and PAV the unique maximizer is 
$S^\star=\{A_1,A_2,B_1\}$.
For SNTV and CC there is a tie between $\{A_1,A_2,B_1\}$ and $\{A_1,A_2,B_2\}$,
which we resolve lexicographically in favor of $S^\star$.%
\footnote{Any fixed tie-breaking produces the same party seat shares in this example.}
Thus the observed seat shares are $s_A=2/3$ and $s_B=1/3$.

\paragraph{Deviation from allocation proportionality.}
Two standard indices are:
\[
L_1=\sum_i |\psi_i-s_i|=0.2,\]
and
\[
\Dkl{\psi}{s}=\sum_i \psi_i \ln\!\frac{\psi_i}{s_i}\approx 0.0216.
\]
Both indicate mild over-representation of $A$.
\end{example}

Another potentially useful measure of deviation from allocation proportionality is \emph{quota compliance}:

\begin{definition}[\sc $\psi$-quota compliance]
\label{def:quota}
Given a party election and a committee of size $k$, let $q_i$ be party $i$'s seat share and let $\psi_i$ be its aggregate score (Def.~14).
We say the outcome satisfies \emph{lower/upper $\psi$-quota} if for every $i\in P$,
\[
q_i \in \left\{ \frac{\lfloor k\,\psi_i \rfloor}{k},\ \frac{\lceil k\,\psi_i \rceil}{k} \right\}.
\]
Over a distribution of instances, we can calculate the fraction of elections in which \emph{all} parties satisfy this condition (strong compliance) as well as the per-party compliance rate (weak compliance).
\end{definition}

To better understand the political effects of deviations from allocation proportionality, we use two further measures: \emph{seat bias} and \emph{defractionalization} (change in the \emph{effective number of parties}). By seat bias we understand the difference between the aggregate party score and that party's seat share. It simply measures whether the party got more seats than it should -- or instead less of them. We are primarily interested in measuring how bias correlates with party size. However, since parties are exchangeable, average bias over, say, the first party should likewise be $0$, since it should be large just as often as it is small. We solve the problem by sorting the parties before averaging, thus obtaining average bias measures for the $i$-th largest party, $i = 1, \dots, |P|$.

\emph{Effective number of components} is a measure of concentration frequently used in political science, where it has been introduced by \citet{LaaksoTaagepera79}. Dating back to a diversity index proposed independently by \citet{Simpson49} in ecology, as well as by \citet{Hirschman45} and \citet{Herfindahl50} in economics, it is the inverse of the sum of squared coordinates of a vector of component sizes:

\begin{definition}[\sc Effective Number of Components]
    For a weight vector $\mathbf{x} \in \mathbf{\Delta}_{d}$, the effective number of components is given by:
    \begin{equation}
        \eta(\mathbf{x}) = \left(\sum_{i=1}^{d} x_i^2 \right)^{-1}
    \end{equation}
\end{definition}

There are two ways of using that index for measuring the effective number of parties: we can calculate the \textit{a priori} number of effective parties, using the vector of vote shares (or, in our case, aggregate scores), and the \textit{a posteriori} number of effective parties, using the vector of seat shares. By itself, neither of those measures proportionality. However, a ratio of the \textit{a posteriori} to \textit{a priori} effective number of parties can be used as an auxiliary measure of allocation proportionality, since it tells us whether a voting rule amplifies or tapers party concentration arising from voter preferences.

\begin{remark}[\sc Evaluating non-OWA rules]
\label{rem:non-owa}
Definition~15 compares seat shares to $\psi^{(s)}$, which depends only on a chosen scoring vector $s$.
Hence, after fixing an \emph{evaluation} scoring vector $\mathbf{s}^\star$ (e.g., Borda or plurality), we can compute $\psi^{(\mathbf{s}^\star)}$ once per profile and evaluate any multiwinner rule (including STV) by comparing its seat shares to $\psi^{(\mathbf{s}^\star)}$ via the indices from Sec.~5.
This separates the choice of a rule from the choice of a vote surrogate.
\end{remark}

\section{Experiments}

Even for multi-district apportionment, proportionality is difficult to model analytically \cite[see, e.g.,][]{BoratynEtAl25b}. Thus, like others \citep[e.g.,][]{BardalEtAl25}, we study the problem primarily experimentally. 
For each combination of the following three parameters: statistical culture (see below), number of parties ($|P| = 3, 4, 5, 6, 8, 10$), and committee size (per district) ($k = 1, 2, 3, 4, 8, 12, 16, 24$), we run $256$ \emph{multi-district experiments}. For experiments with single-winner districts, we have district count $128$. For $2$-winner districts, we have district count $64$; for $3$-winner districts, district count $48$; and for all other district sizes, district count $32$, ensuring in each case that we always have at least $128$ seats and therefore the discretization effects are minimized. In each district, there are $n = 1024$ voters and $k$ candidates per party.

\subsection{Statistical Cultures}

Throughout the paper, we have assumed that allocation proportionality can avoid the problems with identification of cohesive voting groups because such groups are predefined as parties. Yet for that assumption to be true with respect to experimental elections, we need to ensure that the probabilistic model for generating voter preferences does not treat each party's candidates as independent from one another, but instead groups them, reflecting the background assumption that candidates of the same party should be perceived by voters as, on average, more similar than candidates of different parties. Clearly, this is not a feature of any of the probabilistic models used in computational social choice \cite[see generally][]{BoehmerEtAl24,SzufaEtAl20}, which have been developed for non-party elections. Thus, we need to extend them.

Since we are using multi-district party elections for our experiments, candidate group is not the only adjustment we need to make. In real-life district elections, we encounter \emph{intra-district voter clustering}. Voters within a single district share preferences to a greater extent than an unbiased sample of the population \cite{CliffOrd73}. Again, our probabilistic models should reflect that. Otherwise the $c$-district seat shares would (per the central limit theorem) converge to the expected value for the population profile.

Both of those problems have already been addressed by \citet{BoratynEtAl22}. Following them, we consider four classes of probabilistic models:

\begin{definition}[\sc Spatial Models]
    In a $d$-dimensional Euclidean model each party, voter, and candidate is assigned an ideal point in $\mathbb{R}^{d}$ \citep{EnelowHinich84,Merrill84}. First, party ideal points are drawn from $\operatorname{Unif}((0, 1)^{d})$, and then candidate ideal points for each district are drawn from the multivariate normal distribution with location at the party's ideal point and the covariance matrix $\mathbf{\Sigma} := \sigma I_{d}$, where $I_{d}$ is a $d \times d$ identity matrix and $\sigma \in \mathbb{R}_{+}$. Voter ideal points are drawn independently from the uniform distribution on $(0, 1)^{d}$, then shifted in each district independently by a vector drawn from $\operatorname{Unif}(-1/4, 1/4)^{d}$ (to account for district clustering). A vote is obtained by sorting candidates according to the increasing $L_2$ distance from the voter's ideal point.
\end{definition}

\begin{definition}[\sc Single-Peaked Models]
    We consider two models for generating single-peaked profiles. In one \citep{Walsh15a} we are given an ordering on the set of candidates, and each vote is drawn from the uniform distribution on the set of all single-peaked votes consistent therewith. In the other one \citep{Conitzer09} the peak is drawn from a uniform distribution on candidates, and the remainder of the vote is obtained by a random walk. In both models, candidate ordering in each district is obtained as in the $1$-dimensional Euclidean model.
\end{definition}

\begin{definition}[\sc Mallows Model]
    The Mallows model \citep{Mallows57,CritchlowEtAl91} is parametrized by a single parameter $\phi \in [0,1]$, and a (central) vote $\mathbf{v}_c \in L_C$. The probability of generating a vote $\mathbf{v}$ is proportional to $\phi^{f(\mathbf{v}_c, \mathbf{v})}$, where $f(\mathbf{v}_c, \mathbf{v})$ is the Kendall tau distance \citep{Kendall38} between $\mathbf{v}_c$ and $\mathbf{v}$, i.e., the minimum number of swaps of adjacent candidates needed to transform the vote $\mathbf{v}$ into the central vote $\mathbf{v}_c$.
    We first generate a central vote for each district, $\mathbf{v}_c^i$ with parameter $\phi_{1}$ and a starting vote $\mathbf{v}_c^0 := [m]$, then generate votes within each district with parameter $\phi_{2}$ and a district--wide central vote $\mathbf{v}_c^i$. Intra-party clustering is achieved by grouping party candidates together in the starting vote. On sampling from the Mallows model, see \cite{LuBoutilier14}~\footnote{We use a Mallows model parameterization by Boehmer et al. \cite{BoehmerEtAl21}, based on a normalized dispersion parameter $\normphi$.}.
\end{definition}

\begin{definition}[\sc Impartial Culture (IC)]
    Under IC, each vote is drawn randomly from the uniform distribution on linear orders on candidates \citep{CampbellTullock65}. This model does not account for intra-party and intra-district clustering, and is regarded as a poor approximation of real-life \citep{RegenwetterEtAl06,TidemanPlassmann12}.
\end{definition}

\subsection{Results}

We focus on two models that resemble political elections most closely: 2-D Euclidean and 1-D Euclidean. The former is likely more realistic, while the latter is of particular interest because 1-D Euclidean elections are single-peaked. The results for other models (Mallows, Conitzer, Walsh, and IC) are in the forthcoming Appendix. For each model, we look at five statistics: $\alpha$-divergence, $L_2$ norm, defractionalization, and bias for the largest and for the smallest party.

\begin{figure}[!htb]
    \centering
    \begin{subfigure}{0.95\textwidth}
        \includegraphics[width=\textwidth]{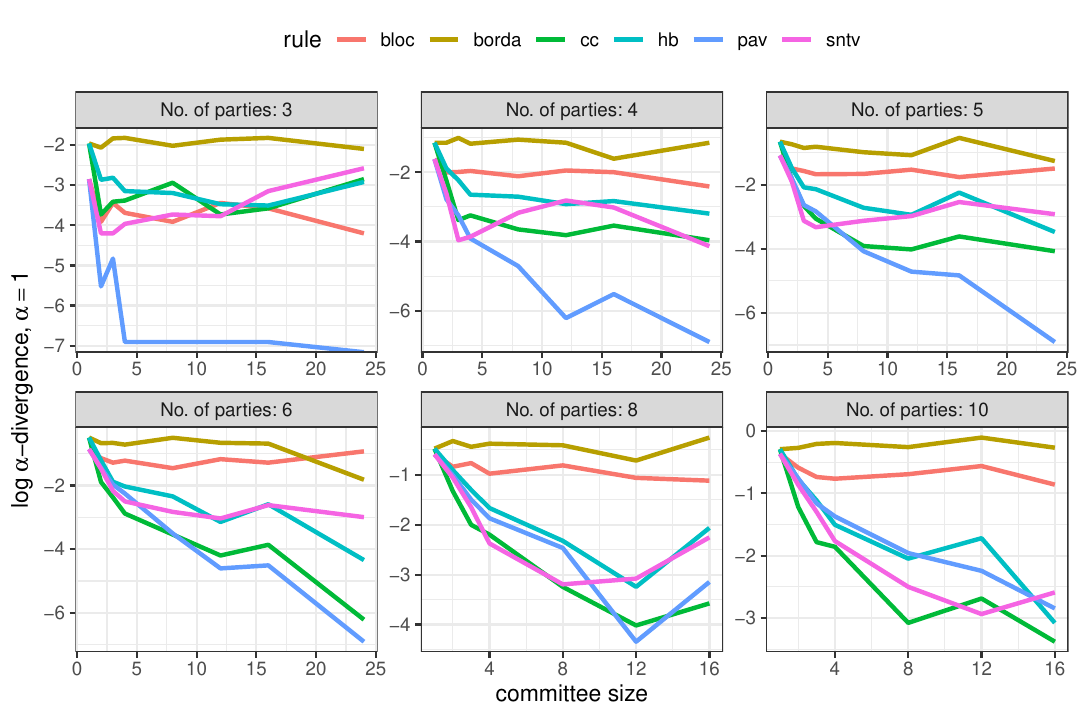}
        \caption{1-D Euclidean}
    \end{subfigure}
    \begin{subfigure}{0.95\textwidth}
        \includegraphics[width=\textwidth]{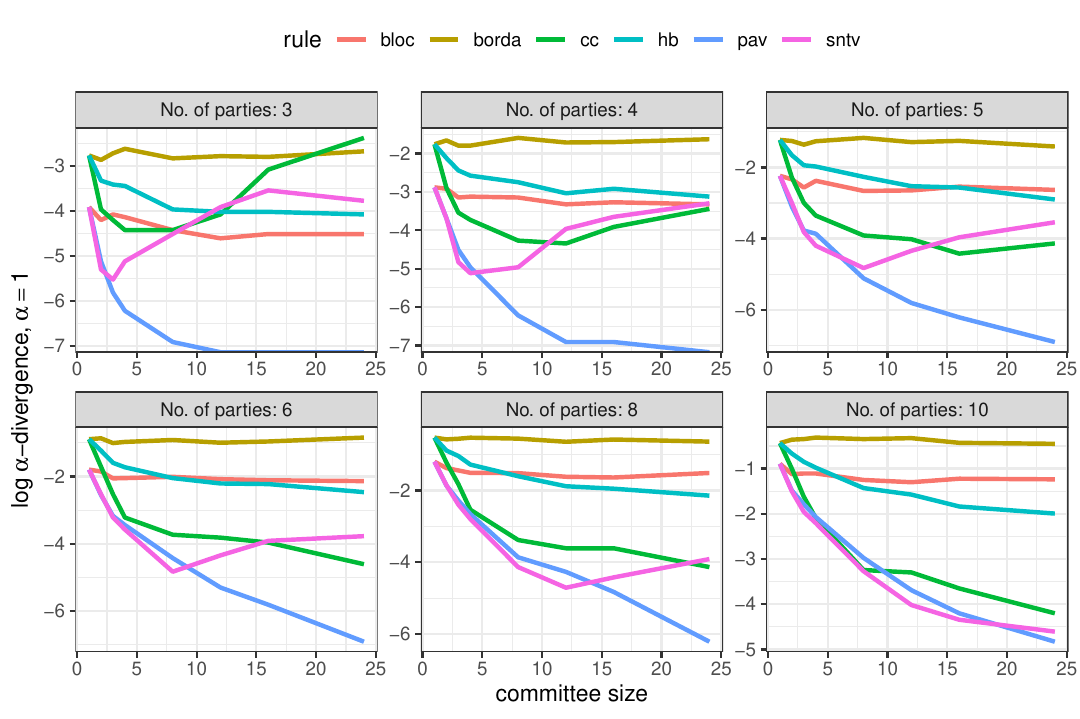}
        \caption{2-D Euclidean}
    \end{subfigure}
    \caption{Allocation proportionality degree ($\alpha$-divergence).}
    \label{fig:alphaDiv}
\end{figure}

\begin{figure}[!htb]
    \centering
    \begin{subfigure}{0.95\textwidth}
        \includegraphics[width=\textwidth]{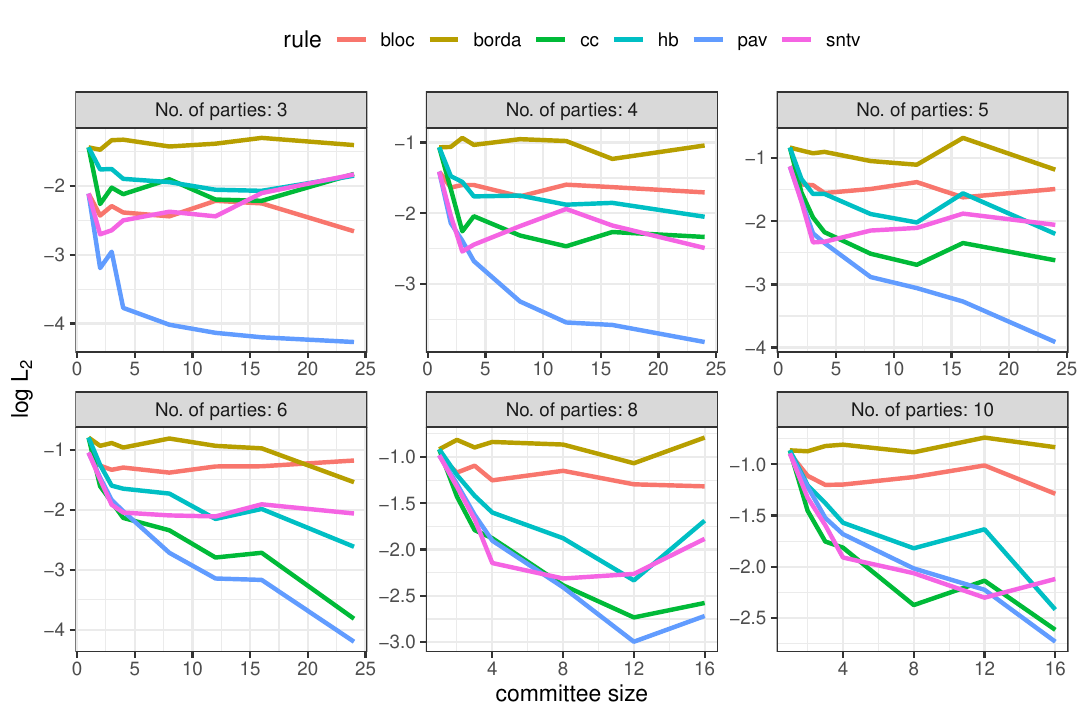}
        \caption{1-D Euclidean}
    \end{subfigure}
    \begin{subfigure}{0.95\textwidth}
        \includegraphics[width=\textwidth]{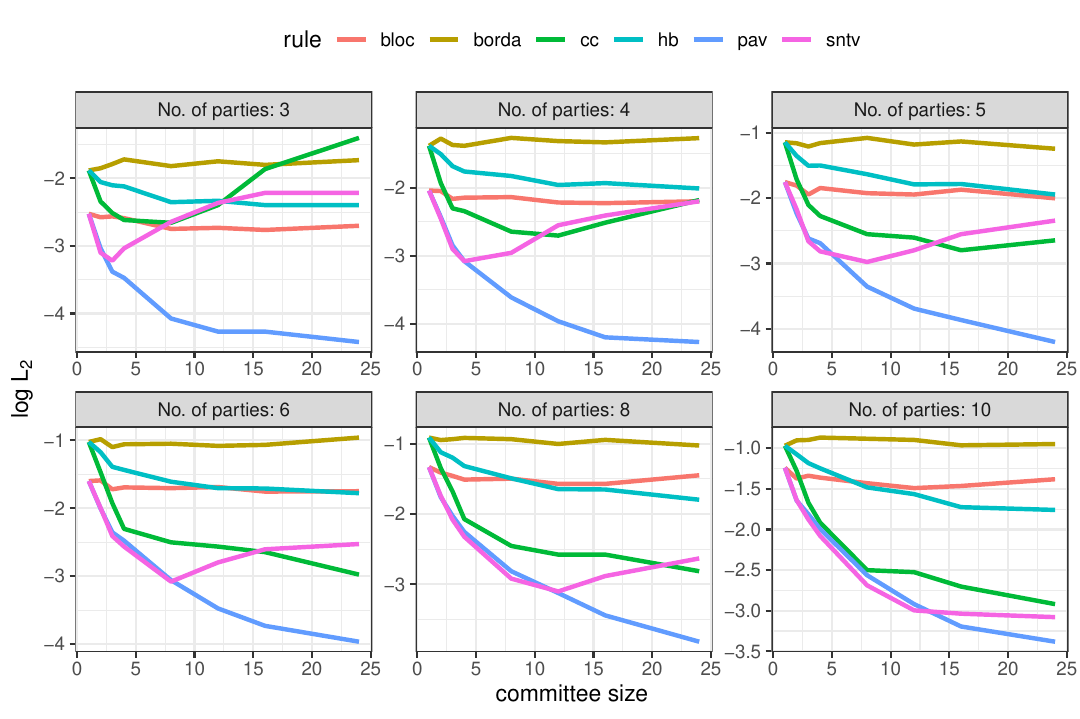}
        \caption{2-D Euclidean}
    \end{subfigure}
    \caption{Allocation proportionality degree ($L_2$ metric).}
    \label{fig:l2}
\end{figure}

\begin{figure}[!htb]
    \centering
    \begin{subfigure}{0.95\textwidth}
        \includegraphics[width=\textwidth]{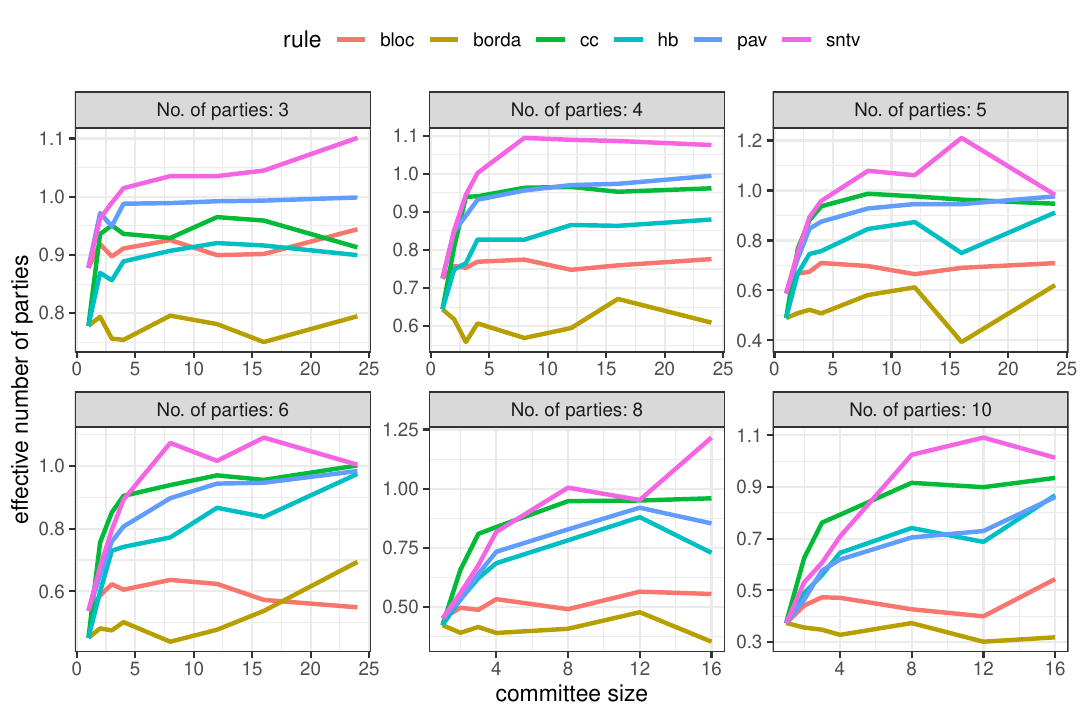}
        \caption{1-D Euclidean}
    \end{subfigure}
    \begin{subfigure}{0.95\textwidth}
        \includegraphics[width=\textwidth]{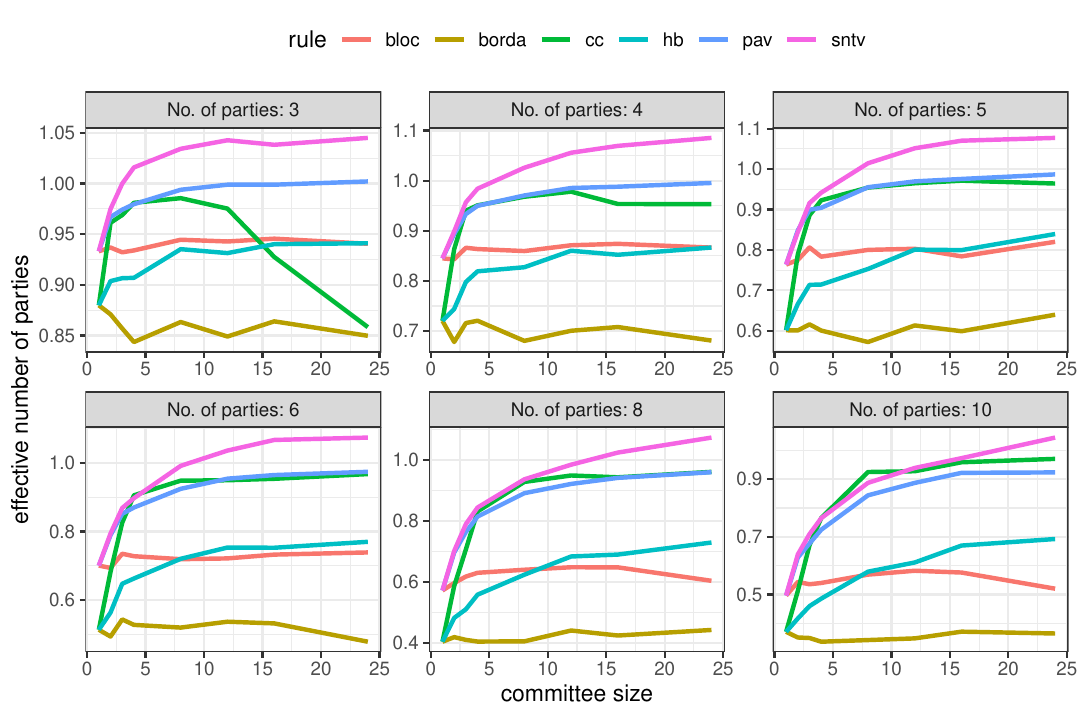}
        \caption{2-D Euclidean}
    \end{subfigure}
    \caption{Defractionalization (change in the effective number of parties).}
    \label{fig:effChange}
\end{figure}

\subsection{Discussion}

For three rules, we have a clear ordering. First, $k$--Borda is usually the least allocation proportional rule, being biased in favor of the larger parties and strongly amplifying concentration. Bloc Voting exhibits similar properties, albeit on a reduced scale. On the other hand, PAV is usually the most allocation proportional rule and does not exhibit significant bias. PAV's allocation proportionality degrees improve with committee size, while for $k$--Borda this parameter appears to have no effect. For BV, a modest improvement with committee size usually exists, but the trend is weaker and sometimes close to flat or slightly non-monotone, depending on the number of parties and the model.

For other rules, the results are more complex. For Harmonic Borda, the rule's relative performance depends on the number of parties and the model used. For small number of parties, especially under 2-D models, it can be even more disproportional than Bloc Voting, though not as disproportional as \linebreak $k$--Borda. For Chamberlin--Courant, the picture is even more complex: for large number of parties it can be highly proportional (not as proportional as PAV, but nearly so). However, its proportionality depends non-monotonically on district magnitude: there exists a minimum (whose position depends on the number of parties). For higher district magnitudes, CC's proportionality can rapidly decrease (especially under \linebreak 2-D models) in the direction of amplified concentration of parties. Finally, for SNTV we are also seeing a number of parties-dependent minimum, but above that minimum, SNTV becomes more egalitarian than proportional, being biased against larger parties and in favor of smaller ones.

\section{Future Work}

Future work will focus on several issues. First, we plan a more systematic study of OWA--based voting rules aimed at determining how allocation proportionality is affected by the OWA vector. Second, we will investigate the relation between allocation proportionality and axiomatic notions of proportionality such as PSC, seeking analytical results. Third, we will compare how allocation proportionality degree relates to other quantitative measures of proportionality, such as the PSC value \cite{BardalEtAl25}. Fourth, we will investigate the non-monotonic behavior of certain common rules, such as SNTV and Chamberlin--Courant. Finally, we plan to extend the proposed framework to other committee scoring rules, and to ordinal voting rules more generally.

\FloatBarrier

\begin{figure}[!htb]
    \centering
    \begin{subfigure}[b]{0.5\textwidth}
        \includegraphics[width=\textwidth]{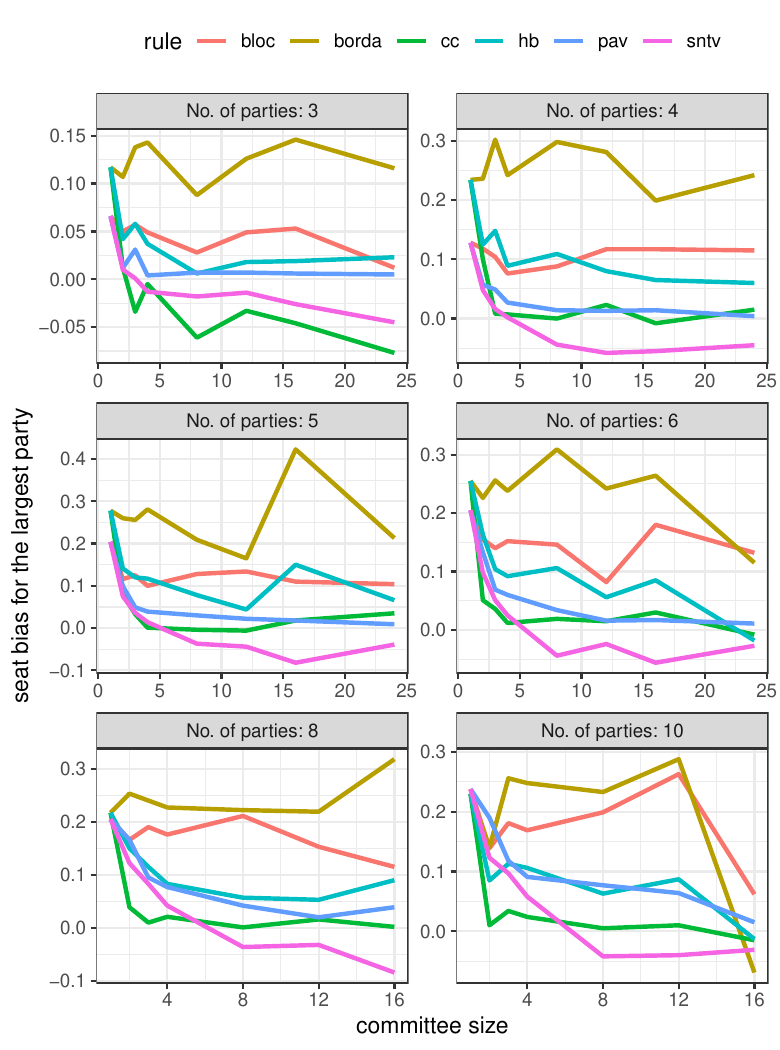}
        \caption{1-D Euclidean}
    \end{subfigure}
    ~
    \begin{subfigure}[b]{0.5\textwidth}
        \includegraphics[width=\textwidth]{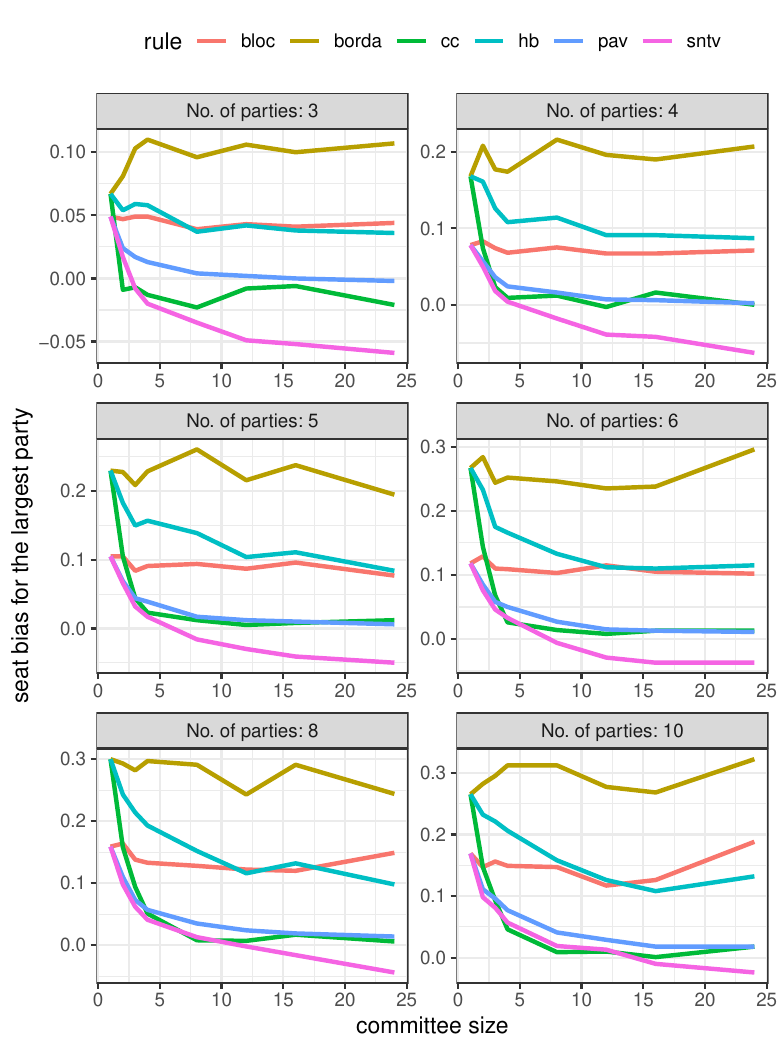}
        \caption{2-D Euclidean}
    \end{subfigure}
    \caption{Bias for the largest party.}
    \label{fig:bias}
\end{figure}

\FloatBarrier

\bibliographystyle{ACM-Reference-Format}
\bibliography{spoilers}


\end{document}